\theoremstyle{plain}
\newtheorem{theorem}{Theorem}
\newtheorem{lemma}[theorem]{Lemma}
\theoremstyle{definition}
\theoremstyle{remark}
\newcommand{\yrcite}[1]{\cite{#1}}
\title{\vspace{-2cm}\rule{\textwidth}{2pt} \\[2ex]
\textbf{Non-Hemolytic Peptide Classification Using \\
A Quantum Support Vector Machine} \\[2ex]
\rule{\textwidth}{2pt} }
\author{Shengxin Zhuang$^{1,5,\dagger,}$\thanks{
  Corresponding authors: \\
  shengxin.zhuang@research.uwa.edu.au, \\ \{john.tanner, du.huynh\}@uwa.edu.au, \\ frederic.cadet.run@gmail.com \\[0.5ex]
  \indent$^\dagger$Zhuang and Tanner contributed equally to this work. \\[0.5ex]
  \indent Zhuang and Tanner are supported by the Australian RTP scholarships at UWA; Wu is supported by the China Scholarship Council (Grant No.~202006470011). In addition, Zhuang is also supported by a PhD grant from the Region Reunion and European Union (FEDER) under the European Operational Program FEDER-Reunion 2021-2027. X.~F.~Cadet is supported by the UKRI CDT in AI for Healthcare \url{http://ai4health.io} (Grant No. P/S023283/1), UK.},
John Tanner$^{1,\dagger,*}$,
Yusen Wu$^1$,
Du Q.~Huynh$^{2,*}$,
Wei Liu$^2$, 
Xavier F.~Cadet$^3$,\\[0.5ex]
Nicolas Fontaine$^4$, 
Philippe Charton$^{5,6}$, 
Cedric Damour$^7$, 
Frederic Cadet$^{4,5,6,*}$, 
Jingbo Wang$^1$ 
\\[1ex]
\small\textit{$^1$Department of Physics, The University of Western Australia, Australia} \\
\small\textit{$^2$Department of Computer Science and Software Engineering, The University of Western Australia, Australia} \\
\small\textit{$^3$Department of Computing, Imperial College London, London, United Kingdom} \\
\small\textit{$^4$PEACCEL, Artificial Intelligence Department, AI for Biologics, Paris, France} \\
\small\textit{$^5$University of Paris City \& University of Reunion, BIGR, Inserm, UMR\_S1134, Paris, France} \\
\small\textit{$^6$Laboratory of Excellence GR-Ex, Paris, France} \\
\small\textit{$^7$EnergyLab, EA 4079, Faculty of Sciences and Technology, University of Reunion, Saint-Denis, France}
}
\date{February 2024}  % use \date{} if we don't want the date
\begin{document}

\maketitle

\begin{abstract}
Quantum machine learning (QML) is one of the most promising applications of quantum computation. However, it is still unclear whether quantum advantages exist when the data is of a classical nature and the search for practical, real-world applications of QML remains active. In this work, we apply the well-studied quantum support vector machine (QSVM), a powerful QML model, to a binary classification task which classifies peptides as either \emph{hemolytic} or \emph{non-hemolytic}. Using three peptide datasets, %with varying degrees of linear separability, 
we apply and contrast the performance of the QSVM, numerous classical SVMs, and the best published results on the same peptide classification task, out of which the QSVM performs best. The contributions of this work include (i) the first application of the QSVM to this specific peptide classification task, (ii) an explicit demonstration of QSVMs outperforming the best published results attained with classical machine learning models on this classification task and (iii) empirical results showing that the QSVM is capable of outperforming many (and possibly all) classical SVMs on this classification task. This foundational work paves the way to verifiable quantum advantages in the field of computational biology and facilitates safer therapeutic development. 
\end{abstract}

\section{Introduction}
Peptides, which are short chains of amino acids, contribute to a wide range of biological functions. From regulating metabolism and immune responses to playing key roles in neurological processes and tissue repair, their diverse functionalities make them attractive candidates for therapeutic development. Drug-based peptides offer several advantages over conventional small-molecule drugs, including higher specificity, lower toxicity, and enhanced bioavailability~\cite{wang2022therapeutic,chen2023accelerating,lau2018therapeutic}. Additionally, their relatively short sequences allow for easier synthesis and modification, facilitating the production of tailored structures with targeted effects. 

Despite their advantages, some peptides present biological dangers. In particular, hemolytic peptides are a class of peptides that have the ability to disrupt the cell membranes of erythrocytes, causing hemolysis, or the breakdown of red blood cells. These peptides are typically cationic and amphiphilic, meaning they have both hydrophobic and hydrophilic regions. This allows them to interact with the phospholipid bilayer of the cell membrane, causing it to destabilise and rupture. For this reason, determining whether a peptide is hemolytic or non-hemolytic is crucial, especially in therapeutic contexts~\cite{yaseen2021hemonet}. In particular, other than hemolysis, hemolytic peptides can also cause severe side effects such as anemia and kidney failure, so identifying them early in the development process can help to prevent adverse events in clinical trials and patient use. In contrast, non-hemolytic peptides demonstrate superior efficacy, making them better candidates for therapeutic purposes as they are less likely to cause unintended effects. Additionally, knowledge of hemolytic activity can guide the design of therapeutic peptides, allowing researchers to optimise their properties for safety.
Consequently, the development of accurate and efficient methods for classifying peptides as hemolytic or non-hemolytic is of great importance.

Classical machine learning (ML) approaches have been instrumental in peptide classification and screening, enabling the identification of peptides with specific properties~\cite{barman2023strategic,shiammala2023exploring}. However, these methods often face limitations in handling complex peptide data and capturing intricate relationships between peptide sequences and their activities~\cite{wan2022deep,fernandez2023autopeptideml,lv2023tppred,basith2020machine,zhang2023deep}, which creates a need for alternative classification methods. 
Recent pioneering experiments on quantum computer processors have demonstrated significant quantum computational advantages in problems such as random state sampling~\cite{boixo2018characterizing,arute2019quantum,zhong2020quantum} and density matrix learning~\cite{huang2022quantum}. Given the significance of these outcomes, quantum machine learning (QML) algorithms are expected to be capable of both recognising more complex patterns than classical ML algorithms~\cite{liu2021rigorous, wu2023quantum}, and training ML models more efficiently~\cite{rebentrost2014qsvm, li2019sublinear}. As a result, QML algorithms may provide the required alternative to classical ML by exploiting the principles and unique aspects of quantum mechanics to facilitate more accurate and efficient peptide classification and screening~\cite{ayuba_kelvin_tera_jessica_2023,avramouli2022quantum,fedorov2022quantum}.

In this work we demonstrate, via numerical simulations, the possibility of QML algorithms outperforming classical ML models on certain peptide classification tasks. In particular, we apply quantum kernel methods (QKMs)~\cite{schuld2021supervised}, specifically the quantum support vector machine (QSVM)~\cite{havlicek2019supervised}, to classify peptides as either \emph{hemolytic} or \emph{non-hemolytic}. Using three different peptide datasets, each containing labelled data which has been verified in a wet lab, we contrast the performance of the QSVMs, classical SVMs and the best results available in relevant literature for the same peptide classification task. We observe that for two of the more challenging datasets, the QSVMs were able to achieve greater accuracies.  The contributions of this work include:
\begin{enumerate}
    \item The first application of QML models to a binary classification task involving classifying peptides based on their hemolytic activity, 
    \item The first explicit demonstration of a QSVM outperforming the best published results attained with classical machine learning models on a peptide classification task, and
    \item The first instance of a peptide classification task for which QML approaches outperform many, if not all, classical SVMs.
\end{enumerate}  
We anticipate that extensions of this research using similar QML methods for other problem instances are possible, and that this work could serve as a significant turning point, facilitating potential near-future breakthroughs in the field of computational biology.

\section{Background and Related Work}

This section provides a comprehensive overview of the current state of research in the classification and screening of non-hemolytic peptides using ML and deep learning.

ML and deep learning techniques have emerged as promising tools for classifying peptides into hemolytic and non-hemolytic categories. These approaches involve training algorithms on datasets of known peptide sequences, which are labelled according to their hemolytic properties, in order to predict the hemolytic nature of novel peptides. Not only can the algorithms analyse large quantities of peptide data, they also help to identify patterns and features that are associated with hemolytic activity.

Numerous studies have demonstrated that ML and deep learning approaches can achieve high levels of accuracy in classifying peptides as hemolytic or non-hemolytic. For instance: 
Timmons et al.~\yrcite{timmons2020happenn} proposed  an artificial neural network classifier for the prediction of hemolytic activity for therapeutic peptides, which achieved an accuracy of 84.06\% on the test set.
Plisson et al.~\yrcite{plisson2020machine} tried 14 binary classifiers for predicting hemolytic activity using 3 datasets, HemoPI-1, HemoPI-2, and HemoPI-3, based on 56 sequence-based physicochemical descriptors, achieving 92.4\%, 72.3\%, and 73.2\% accuracy on the test sets respectively.
Using a recurrent neural network classifier Capecchi et al.~\yrcite{capecchi2021machine} had the an overall accuracy performance for hemolysis prediction of 76\% on the test set.
Salem et al.~\yrcite{salem2022ampdeep} leveraged transfer learning to overcome the challenge of small data and used a deep learning based model, AMPDeep, to achieve an accuracy of 86\% for hemolysis activity classification of antimicrobial peptides on the test set.
Ansari and White \yrcite{ansari2023serverless} used three bidirectional recurrent neural networks sequence-based classifiers to predict hemolysis and achieved an 84\% accuracy.
Perveen et al.~\yrcite{perveen2023hemolytic} proposed a ML-based predictor for hemolytic proteins using position and composition-based features, achieving a 91.51\% accuracy on the test set. To the best of our knowledge, the only other work which has applied QML to a computational biology task involving peptides is available on arXiv~\cite{london2023peptide}. In the said work, the authors were not able to outperform the classical models with their QML models, nor did they compare with the best published results on the task.

The current level of accuracy in classifying peptides into hemolytic and non-hemolytic categories varies from 72.3\% to 92.4\% \cite{plisson2020machine} on the test sets, depending on the algorithm and dataset used. Common performance metrics employed in these studies include accuracy, precision, recall, and the area under the receiver operating characteristic curve (AUC-ROC). These metrics provide a comprehensive evaluation of model performance, considering both false positives and false negatives. In this work though, we will use just the metric of accuracy on the test set to assess the performance of different models.

\section{Methodology}

In this section we introduce QKMs and QSVMs as they appear in the literature and describe how they are applied to the peptide classification task, including details relating to the specific models we use. Note that this section includes basic ideas and concepts from quantum computing. If the reader should require further details or explanations on the topic we recommend referring to Nielsen \& Chuang \yrcite{nielsen2000quantum}, in particular chapters 1, 2, and 4.

\subsection{Quantum Kernel Methods}

\begin{figure}[t!]
    \centering
    \includegraphics[width=\linewidth]{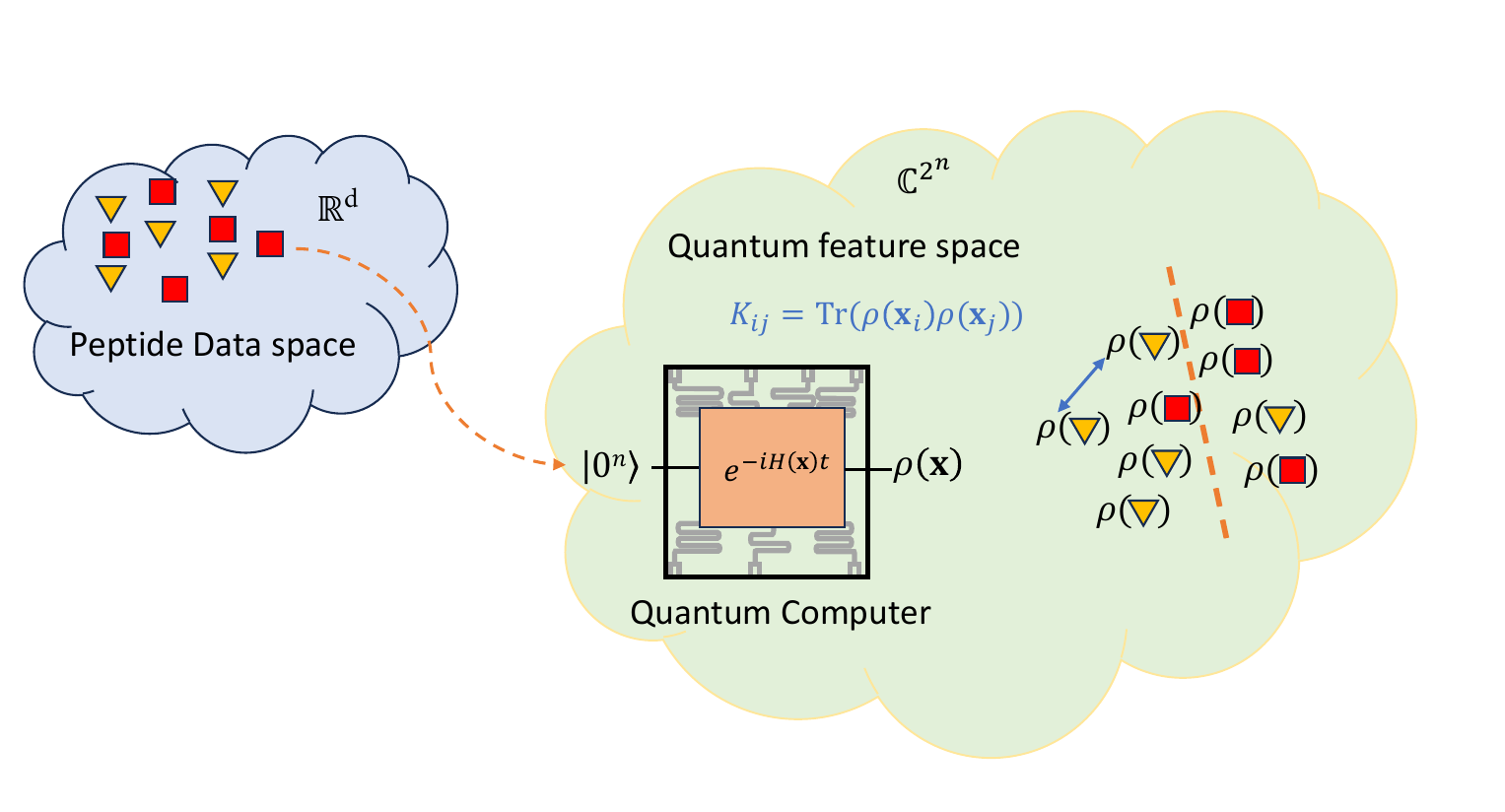}
    \caption{Visual depiction of QKMs}
  \label{fig:workflow_outline}
\end{figure}
    
QKMs~\cite{schuld2021supervised} are powerful tools for handling complex, non-linear relationships in data by exploiting features of quantum mechanics to identify patterns which classical ML algorithms struggle to learn. At the heart of QKMs lies the concept of a quantum feature map, which takes input data and embeds it into a quantum state called a feature state. The feature states belong to a quantum feature space whose dimension scales exponentially in the number of qubits (quantum analogues of bits) and this leads many to believe that QKMs may offer computational advantages over classical kernel methods. In particular, by encoding problem-specific information into the high-dimensional feature states, it might be possible to simplify intricate structures that are challenging to discern in the original input space, even for classical ML models. For example, some problem instances that exhibit highly non-linear relationships in the input space may become linearly separable in the quantum feature space if embedded appropriately~\cite{liu2021rigorous}. 

Let \(\mathcal{D}=\{(\mathbf{x}_i,y_i)\}_{i=1}^{M}\subseteq\mathcal{X}\times\{\pm1\}\) be a training dataset, where
\(\mathcal{X}\equiv\mathbb{R}^d\) is the input data domain,
\(d\in\mathbb{N}\) is the dimension of the input data,
\(\mathbf{x}_i\in\mathcal{X}\) denotes the $i^{\text{th}}$ input training data sample, 
\(y_i\in\{\pm1\}\) denotes the class label for the \(i^{\textrm{th}}\) training data sample and \(M\in\mathbb{N}\) is the total number of training data samples.
A \emph{data encoding unitary}, also sometimes called a parameterised quantum circuit ~\cite{benedetti2019parameterised}, is a function which assigns to each element \(\mathbf{x}\in\mathcal{X}\) a \(2^n\times2^n\) unitary operator \(U(\mathbf{x})\). Given a data encoding unitary \(U\), we define the associated \emph{quantum feature map}~\cite{schuld2021supervised}, denoted \(\rho:\mathcal{X}\to\mathcal{H}_n\), such that 
\begin{equation}
\label{QuantumFeatureMap}
\rho(\mathbf{x})=U(\mathbf{x})|0^n\rangle\langle0^n|U^\dagger(\mathbf{x}),
\end{equation}
where \(\mathcal{H}_n\) is the quantum feature space of \(2^n\times2^n\) Hermitian operators, \(|0^n\rangle\equiv\otimes_{i=1}^{n}|0\rangle\) (where \(\otimes\) denotes the Kronecker product) and \(|0\rangle\equiv\left[\begin{smallmatrix}1\\0\end{smallmatrix}\right]\). The feature state \(\rho(\mathbf{x})\) is the quantum state which results from the physical procedure of applying the unitary operator \(U(\mathbf{x})\) to the initial state \(|0^n\rangle\) on a quantum computer. Given a quantum feature map \(\rho\), the associated \emph{quantum kernel}~\cite{schuld2021supervised}, denoted \(\mathcal{K}_{\rho}:\mathcal{X}\times\mathcal{X}\to\mathbb{R}\), is then defined by
\begin{equation}
\label{QuantumKernel}
\mathcal{K}_{\rho}(\mathbf{x},\mathbf{x}^\prime)=\textrm{Tr}[\rho(\mathbf{x})\rho(\mathbf{x}^\prime)].
\end{equation}
The value of the quantum kernel \(\mathcal{K}_{\rho}(\mathbf{x},\mathbf{x}^\prime)\) is equal to the inner-product in \(\mathcal{H}_n\) between the feature states \(\rho(\mathbf{x})\) and \(\rho(\mathbf{x}^\prime)\) and serves as a measure of similarity between the two states as shown in \cref{fig:workflow_outline}. By substituting \eqref{QuantumFeatureMap} into \eqref{QuantumKernel}, the value of the quantum kernel can equivalently be written as
\begin{equation}
\label{QuantumKernelSimplified}
\mathcal{K}_\rho(\mathbf{x},\mathbf{x}^\prime)=\big|\,\langle0^n|U^\dagger(\mathbf{x})U(\mathbf{x}^\prime)|0^n\rangle\,\big|^2.
\end{equation}
Determining the numerical value of the quantum kernel for different inputs on a classical computer is, in general, a prohibitively expensive operation even for the most powerful supercomputers. This is because evaluating \eqref{QuantumKernelSimplified} (which is less computationally intensive than evaluating \eqref{QuantumKernel}) involves computing the complex-valued vectors \(U(\mathbf{x})|0^n\rangle\), which are of length \(2^n\). So as \(n\) grows, the amount of memory required to simply store the vectors on a classical computer quickly becomes infeasible. However, assuming that the data encoding unitary can be implemented in \(\mathcal{O}(\text{poly}(n))\) time on a quantum computer, we can evaluate the quantum kernel efficiently by applying \(U^\dagger(\mathbf{x})U(\mathbf{x}^\prime)\) to the initial state \(|0^n\rangle\) and running the experiment several times to determine the probability of observing the measurement outcome \(|0^n\rangle\), which is equal to the value of \eqref{QuantumKernelSimplified} \cite{nielsen2000quantum}. 

Not \emph{every} quantum kernel will be classically hard to evaluate, for example, if the data encoding unitary only involves Clifford gates~\cite{aaronson2004simulation} or only entangles a few scattered qubits, 
but there certainly exist quantum kernels which are expected to be classically hard to estimate~\cite{havlicek2019supervised, schuld2019quantum,liu2021rigorous, wu2023quantum}.  This limits classical ML algorithms, including support vector machines, from implementing classification algorithms which might be facilitated by access to such kernel functions. However, with the help of a quantum computer, computing this diverse range of kernels becomes feasible and may facilitate successful ML.
%In this paper, we aim to develop a quantum support vector machine for non-hemolytic peptide classification. 

\subsection{Quantum Support Vector Machines for Non-linear Classification}

Support vector machines (SVMs) are one of the most effective algorithms in machine learning, particularly for non-linear classification tasks. The motivation behind SVMs lies in their emphasis on maximising the margin between different classes in high-dimensional feature spaces. This promotes better generalisation to unseen data and increased resilience to noise in the training dataset. Such an approach enables SVMs to handle both linear and non-linear relationships through the use of kernels, providing a versatile and practical technique for a wide range of applications. Furthermore, by pairing this classical ML approach with the efficient evaluation of quantum kernels on quantum computers, we arrive at the QSVM~\cite{havlicek2019supervised}. 

A QSVM, when supplied with a training dataset \(\mathcal{D}=\{(\mathbf{x}_i,y_i)\}_{i=1}^{M}\subseteq\mathcal{X}\times\{\pm1\}\) and a quantum feature map \(\rho:\mathcal{X}\to\mathcal{H}_n\), attempts to solve the soft-margin dual optimisation problem:
\begin{equation}
\begin{aligned}
\label{DualSVM}
&\min_{{\bm\alpha}\in[0,C]^M}\frac{1}{2}\sum_{i,j}^{M}{\alpha}_i{\alpha}_jy_iy_jK_{ij}-\sum_{i=1}^{M}{\alpha}_i\\
&\qquad\qquad\textrm{s.t.}\quad\sum_{i=1}^{M}{\alpha}_iy_i=0,
\end{aligned}
\end{equation}
where \(C\geq0\) is a parameter which quantifies the penalty associated with a misclassified data point, \(K_{ij}=\mathcal{K}_{\rho}(\mathbf{x}_i,\mathbf{x}_j)\) is an \(M\times M\) matrix called the \emph{quantum kernel matrix} and the solution is the vector \({\bm\alpha}=(\alpha_1,\ldots,\alpha_M)\in[0,C]^M\). Intuitively, solving \eqref{DualSVM} corresponds to maximising the margin between the classes in the quantum feature space, while allowing for the misclassification of some data points at a cost proportional to \(C\). This means that a solution is permitted even when the classes are not linearly separable after being embedded in \(\mathcal{H}_n\).  Once the minimisation problem \eqref{DualSVM} has been solved (which can be achieved deterministically by exploiting some techniques from convex optimisation~\cite{boyd2004convex}) we can make predictions about which class an unseen data sample \(\mathbf{x}\in\mathcal{X}\) belongs to using the equation
\begin{equation}
\label{SVMModel}
y(\mathbf{x}) = \textrm{sign}\left(\sum_{i=1}^{M}\alpha_iy_i\mathcal{K}_{\rho}(\mathbf{x},\mathbf{x}_i)+b\right),
\end{equation}
where \(b\in\mathbb{R}\) can be determined with the Karush-Kuhn-Tucker conditions~\cite{kuhn1951nonlinear}. 
\begin{algorithm}
\caption{Training a QSVM}
\label{alg1}
\textbf{Input:} The training dataset \(\mathcal{D}=\{(\mathbf{x}_i,y_i)\}_{i=1}^{M}\), the data encoding unitary \(U\) which defines \(\rho\) according to \eqref{QuantumFeatureMap}\\
\textbf{Output:} The solution \(({\alpha}_1,\ldots,{\alpha}_M)\in[0,C]^M\) to \eqref{DualSVM}  
\begin{algorithmic}[1]
\For{\(i=1,\ldots,M\)}
   \State Set \(K_{ii}:=1\)
\EndFor
\For{\(i=1,\ldots,M\)}
    \For{\(j=i+1,\ldots,M\)}
        \State Apply \(U^\dagger(\mathbf{x}_i)U(\mathbf{x}_j)\) to the initial state \(|0^n\rangle\)
        \State Measure the probability \(p\) of the outcome \(|0^n\rangle\)
        \State Set \(K_{ij}:=p\) and \(K_{ji}:=p\)
    \EndFor
\EndFor
\State Solve the soft-margin dual optimisation problem \eqref{DualSVM}
\State \Return \(({\alpha}_1,\ldots,{\alpha}_M)\)
\end{algorithmic}
\end{algorithm}

From the description above, one can clearly see that applying the QSVM to a binary classification problem requires us to determine the quantum kernel matrix entries \(K_{ij}=\mathcal{K}_\rho(\mathbf{x}_i,\mathbf{x}_j)\) for all \(i,j=1,\ldots,M\), which is usually done on a quantum computer. There are, however, some shortcuts. For example, as can be seen in \eqref{QuantumKernel}, the quantum kernel matrix is symmetric and has 1's along the main diagonal so we really only need to determine \(K_{ij}\) for \(i<j\) and then symmetrically fill the matrix in order to solve \eqref{DualSVM} (see Algorithm \ref{alg1}). Then once we have determined the solution to \eqref{DualSVM}, to make predictions on a new data sample \(\mathbf{x}\in\mathcal{X}\), we need to estimate the real numbers \(\mathcal{K}_\rho(\mathbf{x},\mathbf{x}_i)\) for all \(i=1,\ldots,M\) on a quantum computer and evaluate \eqref{SVMModel} classically (see Algorithm \ref{alg2}).

\begin{algorithm}
\caption{Making predictions with a trained QSVM}
\label{alg2}
\textbf{Input:} The training dataset \(\mathcal{D}=\{(\mathbf{x}_i,y_i)\}_{i=1}^{M}\), a new input \(\mathbf{x}\in\mathcal{X}\), the solution \(({\alpha}_1,\ldots,{\alpha}_M)\) to \eqref{DualSVM}\\
\textbf{Output:} The class label \(y\in\{\pm1\}\) for the input \(\mathbf{x}\in\mathcal{X}\)
\begin{algorithmic}[1]
\State Set \(c:=0\)
\For{\(i=1,\ldots,M\)}
   \State Apply \(U^\dagger(\mathbf{x})U(\mathbf{x}_i)\) to the initial state \(|0^n\rangle\)
   \State Measure the probability \(p\) of the outcome \(|0^n\rangle\)
   \State Set \(c:=c+{\alpha}_iy_ip\)
\EndFor
\State Determine \(b\) (from KKT conditions)
\State Set \(c:=c+b\)
\State Set \(y:=\textrm{sign}(c)\)
\State \Return \(y\)
\end{algorithmic}
\end{algorithm}

\subsection{Hamiltonian and Quantum Kernel Selection}
\label{sec:encoding}
We now discuss the quantum kernels used in our numerical simulations, together with details about how one would physically implement them on a quantum computer. We begin by defining the set of \(n\)-qubit Pauli strings,
\begin{equation}
\label{PauliStrings}
\mathcal{P}_n\equiv\big\{\otimes_{i=1}^{n}\sigma_i:\sigma_i\in\{\mathbb{I},\sigma_X,\sigma_Y,\sigma_Z\}\big\},
\end{equation}
where \(\sigma_X,\sigma_Y,\sigma_Z\) are the Pauli \(X,Y,Z\) operators respectively~\cite{nielsen2000quantum} and \(\mathbb{I}\) is the \(2\times2\) identity operator. Note that for a given number of qubits, \(n\), there are \(4^n\) different Pauli strings and each can be represented by a different \(2^n\times2^n\) complex-valued matrix. Given an input data sample \(\mathbf{x}=(\mathbf{x}^{(1)},\ldots,\mathbf{x}^{(d)})\in\mathcal{X}\), we randomly sample \(d\) Pauli strings \(\{P_j\}_{j=1}^{d}\subseteq\mathcal{P}_n\) and define the Hermitian operator 
\begin{equation}
\label{ProblemHamiltonian}
H(\mathbf{x})\equiv\sum_{j=1}^{d}\mathbf{x}^{(j)}P_j.
\end{equation} 
In quantum mechanics the Hermitian operator in \eqref{ProblemHamiltonian} is called a \emph{Hamiltonian} and is closely related to the total energy of a physical system. Each term \(\mathbf{x}^{(j)}P_j\) in the Hamiltonian describes some kind of physical interactions that the system undergoes, for example, via the exchange of heat or kinetic energy with another object. By precisely controlling a physical system, e.g. a quantum computer, and the interactions it undergoes we can encode information such as \(\mathbf{x}\in\mathcal{X}\) into the way the system evolves in time, allowing us to implement specific operations such as the matrix exponential of the Hamiltonian in \eqref{ProblemHamiltonian}. The data encoding unitary we use in this work is defined by
\begin{equation}
\label{ProblemDataEncodingUnitary}
U(\mathbf{x})\equiv\left(\prod_{j=1}^{d} e^{-i\mathbf{x}^{(j)}P_jt/s}\right)^s,
\end{equation}
which in turn defines the quantum feature map and the quantum kernel according to \eqref{QuantumFeatureMap} and \eqref{QuantumKernel} respectively. Note that each choice of the Pauli strings \(\{P_j\}_{j=1}^{d}\), number of qubits \(n\in\mathbb{N}\), \(t>0\), and \(s\in\mathbb{N}\) defines a different data encoding unitary and hence a different quantum kernel. 

At a glance, the data encoding unitary in \eqref{ProblemDataEncodingUnitary} looks convoluted and artificial but it is actually motivated by considerations of how one would physically apply a unitary operator on a quantum computer. In particular, the unitary operator we would like to implement is \(e^{-iH(\mathbf{x})t}\), which corresponds to evolving the system according to the interactions described by \(H(\mathbf{x})\) for a time \(t\). Notice that if the Pauli strings \(P_j\) were real numbers (instead of matrices) then \eqref{ProblemDataEncodingUnitary} would simplify to \(e^{-iH(\mathbf{x})t}\). However, when exponentiating matrices it is not necessarily true that \(e^Ae^B\) equals \(e^{A+B}\) so we do not get the same kind of simplification. This is unfortunate because implementing matrix exponentials of individual Pauli strings (and their scalar multiples) is efficient on a quantum computer, so if \(e^Ae^B=e^{A+B}\) was true for all square matrices \(A,B\), then applying \(e^{-iH(\mathbf{x})t}\) would simply involve implementing \(e^{-i\mathbf{x}^{(j)}P_jt}\) for all \(j=1,\ldots,d\) in any order. But alas, things are not so simple. Instead we use an approximation inspired by the Trotter product formula~\cite{nielsen2000quantum} which states that
\begin{equation}
\label{TrotterFormula}
e^{A+B}=\lim_{s\to\infty}\left[e^{A\slash s}e^{B\slash s}\right]^s.
\end{equation}
Note that we can introduce \(t\) into \eqref{TrotterFormula} by replacing \(A\mapsto At\) and \(B\mapsto Bt\). Ideally, we would like to take \(s\to\infty\) so that the equality holds, but this would translate into a physical procedure that takes an infinite amount of time. So, instead we choose a reasonable value for \(s\in\mathbb{N}\) to get a sensible approximation. Iterating the approximation for each term in the exponent of \(e^{-iH(\mathbf{x})t}\) results in \eqref{ProblemDataEncodingUnitary}. So, \eqref{ProblemDataEncodingUnitary} is really just a description of the experimental implementation of \(e^{-iH(\mathbf{x})t}\). Also note that the encoding of data into the Hamiltonian in \eqref{ProblemHamiltonian} could be used for other quantum algorithms too.

Another nice aspect of using a data encoding unitary which is closely related to \(e^{-iH(\mathbf{x})t}\) is that we can derive an upper bound on the generalisation error that the associated QSVM will exhibit, which helps us to choose appropriate values of \(t\) (and \(s\)).

\begin{theorem}
\label{theorem_gen_error_bound}
    Consider the training dataset $\mathcal{D}=\{(\mathbf{x}_i,y_i)\}_{i=1}^M\subseteq\mathcal{X}\times\{\pm1\}$, where \(\mathcal{X}\equiv\mathbb{R}^d\). Let $U(\mathbf{x})=e^{-iH(\mathbf{x})t}$ be the data encoding unitary, where the Hermitian operator $H(\mathbf{x})\equiv\sum_{j=1}^{d}\mathbf{x}^{(j)}P_j$ is defined in terms of $n$-qubit Pauli strings $\{P_j\}_{j=1}^{d}\subseteq\mathcal{P}_n$, and the evolution time $t$ satisfies $t\ll 1$. Then for any $\delta>0$, with probability $\geq1-\delta$, the generalised error, $\epsilon$, of quantum kernel method satisfies
    \begin{align}
        \epsilon\leq\frac{8(\|\bm\alpha\|^2+\kappa(\mathcal{X})t^2)}{\sqrt{M}}\left(1+\frac{1}{2}\sqrt{\frac{\log(1/\delta)}{2}}\right),
    \end{align}
    where $\kappa(\mathcal{X}) = \sum_{i,j}\alpha_i\alpha_j[\langle0^n|(H(\mathbf{x}_i) - H(\mathbf{x}_j))|0^n\rangle]^2$.
\end{theorem}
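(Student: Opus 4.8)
The plan is to read the bound off the standard margin-based generalisation theory for kernel methods, specialised using the small-$t$ structure of the encoding $U(\mathbf x)=e^{-iH(\mathbf x)t}$. The $1/\sqrt M$ scaling together with the $\sqrt{\log(1/\delta)}$ term is the signature of a Rademacher-complexity argument closed off by McDiarmid's bounded-differences inequality, so I would first write the trained model from \eqref{SVMModel} as a linear functional $f(\mathbf x)=\langle w,\rho(\mathbf x)\rangle_{\mathrm{HS}}+b$ on the feature states \eqref{QuantumFeatureMap}, with Hilbert--Schmidt weight $w=\sum_i\alpha_i y_i\,\rho(\mathbf x_i)$. For the norm-constrained class $\mathcal F_B=\{\mathbf x\mapsto\langle w,\rho(\mathbf x)\rangle_{\mathrm{HS}}:\|w\|_{\mathrm{HS}}\le B\}$ and a Lipschitz surrogate of the $0/1$ loss, the textbook bound gives, with probability at least $1-\delta$, that $\epsilon$ is at most a constant multiple of the empirical Rademacher complexity $\widehat{\mathcal R}_M(\mathcal F_B)$ plus a term of order $\sqrt{\log(1/\delta)/M}$. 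This already has the shape of the claimed inequality, so the remaining work is to estimate $\widehat{\mathcal R}_M(\mathcal F_B)$ and the radius $B$.

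Second, I would exploit the purely quantum fact that each feature state is pure: by \eqref{QuantumKernel} the diagonal kernel entries satisfy $\mathcal K_\rho(\mathbf x_i,\mathbf x_i)=\mathrm{Tr}[\rho(\mathbf x_i)^2]=1$, so $\sum_i\mathcal K_\rho(\mathbf x_i,\mathbf x_i)=M$. The standard kernel estimate $\widehat{\mathcal R}_M(\mathcal F_B)\le \tfrac{B}{M}\sqrt{\sum_i\mathcal K_\rho(\mathbf x_i,\mathbf x_i)}$ then collapses to $\widehat{\mathcal R}_M(\mathcal F_B)\le B/\sqrt M$. This is exactly what places $\sqrt M$ in the denominator and lets both the complexity and the concentration contributions be pulled out with the common prefactor $B/\sqrt M$, to be repackaged as $\tfrac{8B}{\sqrt M}\bigl(1+\tfrac12\sqrt{\log(1/\delta)/2}\bigr)$ once the universal constants are chosen generously.

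Third, and most delicately, I would bound the radius $B=\|w\|_{\mathrm{HS}}$. Expanding gives $\|w\|_{\mathrm{HS}}^2=\sum_{i,j}\alpha_i\alpha_j y_iy_j\,\mathcal K_\rho(\mathbf x_i,\mathbf x_j)$, and separating the diagonal, where $y_i^2=1$ and $\mathcal K_\rho(\mathbf x_i,\mathbf x_i)=1$, isolates the term $\|\bm\alpha\|^2=\sum_i\alpha_i^2$. For the off-diagonal entries I would use $t\ll1$ to Taylor expand the overlap in \eqref{QuantumKernelSimplified}: since $\langle0^n|U^\dagger(\mathbf x_i)U(\mathbf x_j)|0^n\rangle=1+it\,\langle0^n|(H(\mathbf x_i)-H(\mathbf x_j))|0^n\rangle+O(t^2)$ and the first-order term is purely imaginary, taking the modulus squared feeds $[\langle0^n|(H(\mathbf x_i)-H(\mathbf x_j))|0^n\rangle]^2$ into the kernel only at order $t^2$. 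Summing these corrections against $\alpha_i\alpha_j\ge0$ (recall $\bm\alpha\in[0,C]^M$) and bounding $y_iy_j\le1$ reproduces $\kappa(\mathcal X)t^2$, so that $\|w\|_{\mathrm{HS}}^2\le\|\bm\alpha\|^2+\kappa(\mathcal X)t^2$ up to a higher-order remainder. In the regime of interest this quantity is at least one, so $B\le\|\bm\alpha\|^2+\kappa(\mathcal X)t^2$, which matches the linear appearance of this factor in the statement; substituting it into the bound of the first two paragraphs yields the theorem.

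The main obstacle is this third step: making the perturbative passage rigorous rather than formal. I must control the $O(t^2)$ and higher terms of $e^{-iH(\mathbf x)t}$ uniformly --- ideally through an explicit operator-norm tail estimate on the Taylor remainder --- to justify discarding them under the hypothesis $t\ll1$, and I must track how the equality constraint $\sum_i\alpha_i y_i=0$ from \eqref{DualSVM} interacts with the order-$1$ and order-$t$ pieces of the off-diagonal sum so that the surviving quadratic contribution is genuinely dominated by $\kappa(\mathcal X)t^2$. Once the radius estimate is secured, assembling the Rademacher term, the McDiarmid term, and the constants into the stated factor $8(\|\bm\alpha\|^2+\kappa(\mathcal X)t^2)\bigl(1+\tfrac12\sqrt{\log(1/\delta)/2}\bigr)/\sqrt M$ is routine bookkeeping.
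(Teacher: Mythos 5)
Your proposal follows essentially the same route as the paper's own proof sketch: the paper likewise represents the trained model as a linear functional in the quantum feature space, invokes the standard kernel Rademacher-complexity estimate (Theorem~5.5 of Mohri et al., quoted as Lemma~\ref{lemma1}) together with McDiarmid's inequality to produce the $1/\sqrt{M}$ and $\sqrt{\log(1/\delta)}$ terms, and bounds the weight norm by a first-order Taylor expansion of $e^{-iH(\mathbf{x})t}$, yielding $\|\bm\alpha\|^2+\kappa(\mathcal{X})t^2$. The differences are matters of detail rather than of approach: the paper's weight vector $\mathbf{w}=\sum_j\alpha_j|\psi(\mathbf{x}_j)\rangle\otimes|\psi^*(\mathbf{x}_j)\rangle$ omits the labels $y_j$ (so it never confronts the order-one off-diagonal sum you rightly flag, and in fact silently replaces $\sum_{i,j}\alpha_i\alpha_j$ by $\|\bm\alpha\|^2$ there), and the paper obtains the squared norm $\eta^2$ directly from the Lipschitz constant and range of the loss rather than through your $B\geq 1$ upgrade of $\sqrt{z}\leq z$.
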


A proof of this theorem is given in \ref{app:proof}. This result implies that the generalisation error will be proportional to \(t^2\), meaning that a smaller value for \(t\) will likely be better. However one should keep in mind that an upper-bound does not give us an equality, just some basic insight which may guide the choice of our hyperparameters. The real deciding factor which will influence the optimal hyperparameter values comes from the actual model performances. 

\section{Experiments and Discussion}
In this section, we provide details about the datasets, how they were preprocessed, and the associated learning task considered in this work. We then discuss how we chose the hyperparameters that define the classical and quantum kernel models which are applied to the testing set.

\subsection{Datasets and Learning Task}
\label{hemopi}
In this work we use the three HemoPI datasets: HemoPI-1, HemoPI-2, and HemoPI-3. These datasets are compilations of experimentally validated peptides labelled as hemolytic or non-hemolytic which have been extracted from various sources, including the Hemolytik database~\cite{gautam2014hemolytik}, Swiss-Prot~\cite{jungo2012uniprotkb} (a curated protein sequence database which is part of UniProt~\cite{UniProt2023}), and the Database of Antimicrobial Activity and Structure of Peptides (DBAASP)~\cite{pirtskhalava2016dbaasp}. The datasets were originally published by Chaudhary et al.~\yrcite{chaudhary2016web} and are freely available for download on the HemoPI website.
HemoPI-1 is perfectly balanced and comprises 552 hemolytic peptides from Hemolytik and 552 non-hemolytic peptides from Swiss-Prot. HemoPI-2 contains 552 peptides with high hemolytic efficiency and 462 non-hemolytic peptides from Hemolytik. HemoPI-3 consists of 885 peptides with high hemolytic efficiency and 738 low/non-hemolytic peptides from Hemolytik and DBAASP.
The peptide sequence lengths range from 4 to 133 amino-acids residues, with a mean of 20 across the 3 datasets. All three datasets will made available on GitHub upon publication. 

The standard representation of a protein sequence is given by a string of alphabetic letters, where each letter represents an amino acid. There are many choices one might make about how to convert the alphabetic sequences into vectors and this choice is of vital importance since it ultimately determines what can be learned by the ML model. One may also choose to represent a peptide in other fashions, for example, commonly used embeddings include one-hot-encoding and AAIndex~\cite{kawashima2007aaindex}. However, one issue associated with choosing an embedding is that peptides have varying lengths, and the consequences of having different input lengths for a learning model can be quite severe. To overcome this issue we chose to use 40 physicochemical descriptors, such as the molecular weight and hydrophobicity which are calculated through experimentally measured properties, to represent each peptide sequence.  Refer to \ref{app:descriptors} for detailed description of the physicochemical properties utilised in this study. Accordingly, the learning task associated with the datasets involves classifying peptide sequences as either \textit{hemolytic} or \textit{non-hemolytic} based on input data given by the 40-dimensional vectors of physicochemical descriptors. With respect to the data encoding unitary \eqref{ProblemDataEncodingUnitary}, this means that we have \(d=40\) and that \(\mathcal{X}\equiv\mathbb{R}^{40}\).

\subsection{Preprocessing}
The HemoPI-1 dataset is balanced in terms of class distribution, while HemoPI-2 and HemoPI-3 were slightly imbalanced. Accordingly, undersampling was applied to the training datasets of HemoPI-2 and HemoPI-3 to ensure that each dataset individually contains the same number of instances for both classes. Each dataset is then divided into two subsets: a larger training dataset (80\%) used for model development and a smaller test dataset (20\%) for evaluating the model's performance on unseen data. Additionally, the training and testing datasets from each of the 3 HemoPI datasets were standardised. In particular, we applied \(z\)-score standardisation to the distribution of values for each physicochemical descriptor within each of the 3 datasets according to the training set. This means that, after preprocessing, in each of the 3 training datasets each physicochemical descriptor individually takes on numerical values which are distributed with a mean of 0 and a standard deviation of 1. The test sets were then adjusted in a similar fashion according to the mean and standard deviations calculated in the training sets.

\subsection{Model Selection and Implementations}
In this work we apply the QSVM with the feature map defined by \eqref{ProblemDataEncodingUnitary}, as well as popular classical kernels such as the linear, polynomial, and radial basis function, % and sigmoid kernel functions, 
to each of the three HemoPI datasets. In order to select the best hyperparameter values for all the different models, we perform a stratified 5-fold cross-validation and a grid-search over the different hyperparameter values and pick the best model  for each dataset according to the metric of average accuracy over the 5 folds. 

For the quantum kernel, we need to choose an appropriate number of qubits \(n\in\mathbb{N}\), a value for \(t>0\) and a value for \(s\in\mathbb{N}\). For the number of qubits $n$, we simulated up to $n=14$ and found no obvious improvement in validation accuracy for $n > 6$. As a large $n$ value inevitably lengthens the training time, we set \(n=6\) for the remainder of the simulations. In order to pick \(t\) and \(s\), we carried out 5-fold cross-validation for different choices of $(t,s)$. Figure \ref{fig:train_val_acc} shows the average training and validation accuracies from the cross validation on each dataset. According to these results we picked the optimal hyperparameter values for each dataset (indicated by red arrows in Figure \ref{fig:train_val_acc}; see also \cref{tab:best}). Note that the generalisation error increases as $t$ increases, while $s$ does not contribute significantly to the accuracy, which is consistent with \cref{theorem_gen_error_bound}.  

Similarly, for the classical kernels for SVM, we trained the SVM classifiers with the linear, RBF, and polynomial kernels using 5-fold cross validation. For all the HemoPI datasets, the classifiers with the RBF kernel consistently outperformed those with the linear and polynomial kernels. The optimal hyperparameters for SVM with the RBF kernel are shown in \cref{tab:best}; the optimal hyperparameters for the other two kernels are shown in \ref{app:linear_poly}.

\begin{table}[t!]
\caption{The best QSVM and classical SVM for the HemoPI datasets, where $n$ denotes the number of qubits; $t$ and $s$ are the variables described in \eqref{ProblemDataEncodingUnitary}; $C$ is the regularisation parameter for the SVM classifier; and $\gamma$ is the SVM kernel parameter.}
\label{tab:best}
\vskip 0.15in
\begin{center}
\small
\begin{tabular}{l|ccr|ccc}
\toprule
& \multicolumn{3}{c|}{QSVM} & \multicolumn{3}{c}{Classical SVM} \\
\cmidrule{2-7}
Dataset &$n$& $t$ & $s$ & Kernel & $C$ & $\gamma$ \\
\midrule
HemoPI-1 &6 & 0.3 & 10 & RBF & 100 & 0.001\\
HemoPI-2 &6 & 0.15   & 10 & RBF & 10 & 0.001\\
HemoPI-3 &6 & 0.15 & 10 & RBF & 10 & 0.01\\
\bottomrule
\end{tabular}
\end{center}
\vskip -0.1in
\end{table}

\begin{figure*}[t!]
    \centering
    \includegraphics[width=0.8\textwidth]{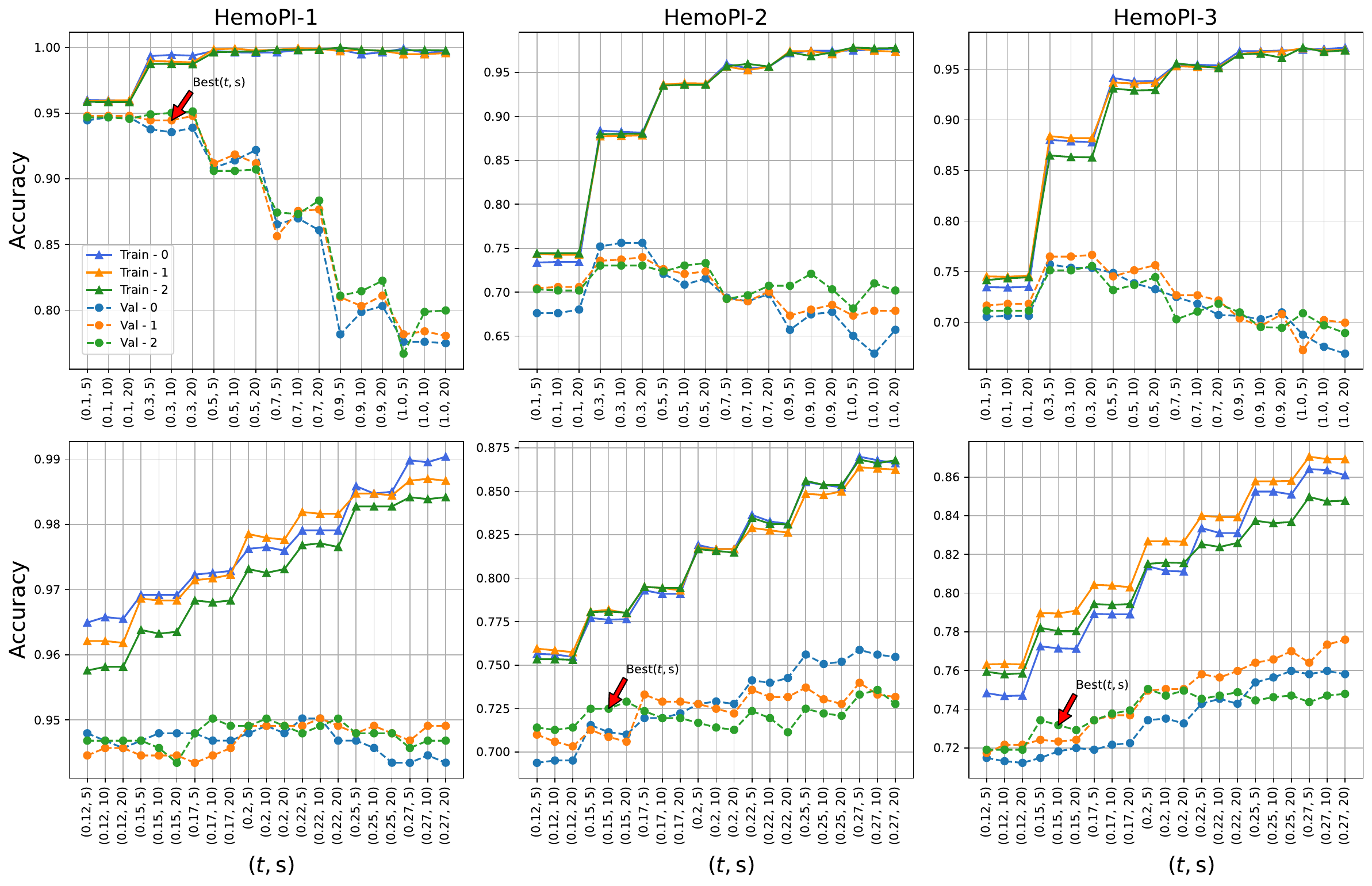}
    \caption{Training and validation accuracies of the quantum kernel on all three HemoPI datasets for different combinations of $(t,s)$ values. This plot illustrates the comparison of training (Train-0, Train-1, Train-2) and validation (Val-0, Val-1, Val-2) accuracies on three sets of randomly sampled Pauli strings for the Hamiltonian, indexed as 0, 1, and 2. 
    The top row shows the initial grid search for the range of $(t,s)$ values; the bottom row shows the final grid search on the $(t,s)$ values found in the top row.}
    \label{fig:train_val_acc}
\end{figure*}

The quantum kernels were implemented with a noiseless classical simulation using the \texttt{\small pennylane} Python library~\cite{bergholm2018pennylane}. In particular, \texttt{\small pennylane} was used to calculate the quantum kernel matrix which was then passed to the \texttt{\small scikit-learn} function \texttt{\small svm.svc} as a \emph{precomputed} kernel. The classical kernels were also implemented with \texttt{\small scikit-learn}'s \texttt{\small svm.svc}. The best models having the highest average accuracies and smallest difference from the average training accuracies were then picked from the cross validation results.

\cref{table1:testing acc} shows the best results we obtained from the quantum kernel and classical kernels. The parameters of all kernels are shown in \cref{tab:best}. As mentioned in \cref{hemopi}, the non-hemolytic peptides in HemoPI-1 were randomly generated from Swiss-Prot, potentially having minimal or no overlap with the hemolytic peptides. This distinct separation between the two classes resulted in high prediction accuracy. However, in HemoPI-2 and HemoPI-3, the non-hemolytic peptides were real peptides that were difficult to distinguish from the hemolytic peptides, resulting in generally lower prediction accuracy. The t-SNE projections show the class distributions of 3 datasets based on the 40 descriptors in \ref{app:tsne} and are consistent with these intuitions.

\begin{table}[t!]
\caption{Testing accuracies on the HemoPI test sets.} 
\label{table1:testing acc}
\vskip 0.15in
\begin{center}
\begin{small}
\begin{sc}
\begin{tabular}{lcccr}
\toprule
Kernel & HemoPI-1 & HemoPI-2 & HemoPI-3 \\
\midrule
Quantum   & 95.5         & \textbf{74.3} & \textbf{76.0} \\
Linear    &94.5 & 72.3             & 75.1 \\
RBF       &\textbf{96.8} & 72.3   & 75.1 \\
Poly      &94.5 & 71.2             & 73.2 \\

\bottomrule
\end{tabular}
\end{sc}
\end{small}
\end{center}
\vskip -0.1in
\end{table}

\begin{figure*}[t!]
    \centering
    \includegraphics[width=0.8\textwidth]{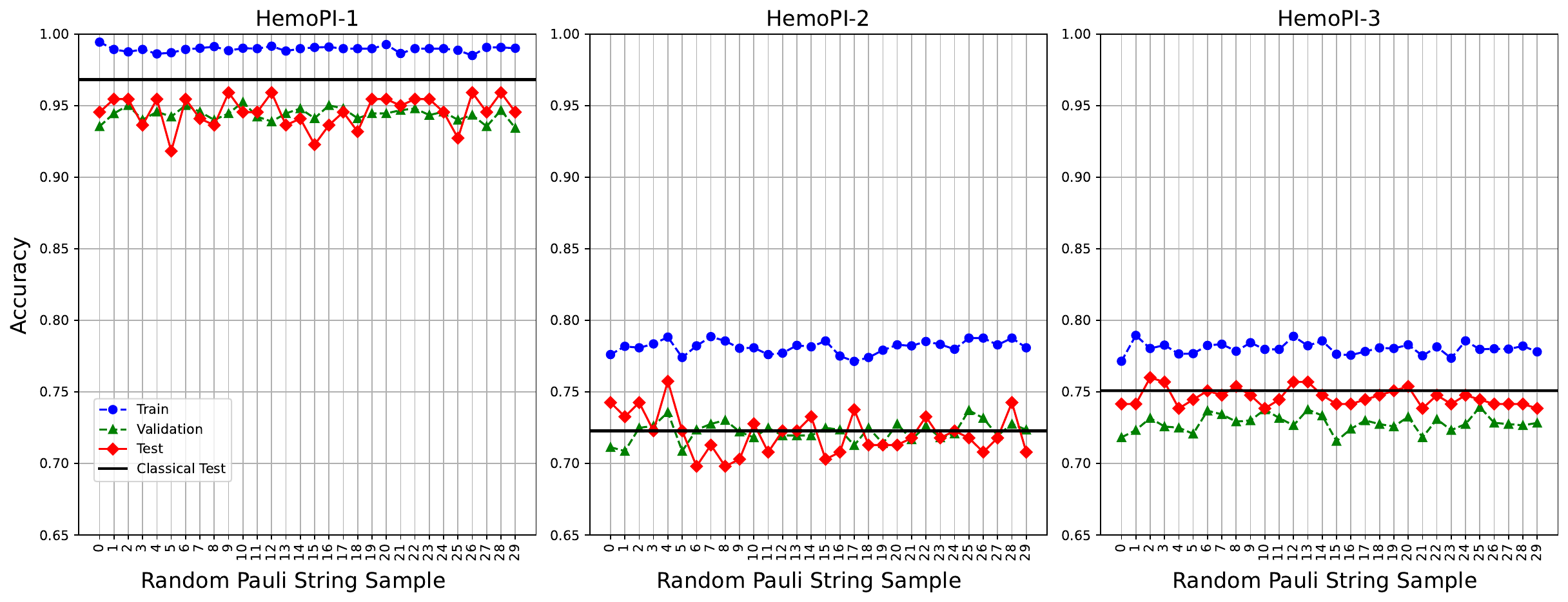}
    \caption{Training, validation and testing accuracies on 30 sets of randomly sampled Pauli Strings with the best $t$ and $s$ values identified in \cref{fig:train_val_acc} for each dataset. The black horizontal line in each plot indicates the testing accuracy from the best classical kernel shown in \cref{table1:testing acc}.}
    \label{fig:30sets}
\end{figure*}

The results in \cref{table1:testing acc} show that quantum kernels achieved higher accuracies on HemoPI-2 and HemoPI-3 compared to classical SVMs, but were outperformed by the classical RBF kernel on HemoPI-1. In particular, an increase in accuracy of 2\% and 0.9\% was achieved on HemoPI-2 and HemoPI-3 respectively compared to classical SVMs. While this increase might be small from a machine learning perspective, such an improvement can have significant clinical and pharmaceutical implications. From a medical perspective this improvement means safer development of therapeutic peptides and a reduced risk of hemolysis and subsequent complications. Similarly, when compared with the results of Plisson et al.~\yrcite{plisson2020machine}, which held the previous record for the best accuracy on the HemoPI datasets, the quantum kernels provide accuracies which improve on their classical ML results by 3.1\%,  2\%, and 2.8\% for the HemoPI-1, HemoPI-2, and HemoPI-3 datasets respectively. Worth noting however is that their models used slightly different representations for the peptides, with 56 physicochemical descriptors instead of the 40 used in this work. Consequently, we cannot rule out that their classification algorithms may perform better than the QSVM if supplied with identical input data.

\subsection{Random Pauli Strings}
In the machine learning community, the idea that feature maps which are tailored for specific problems tend to result in better performance is widely agreed upon and empirically supported. However, our results seem to indicate that for the HemoPI datasets, especially HemoPI-2 and HemoPI-3, that we have a reasonable amount of freedom about which feature maps to use (within a large class). In particular, as can be seen in \cref{fig:train_val_acc}, the feature maps defined by different random samples of the Pauli operators \(\{P_j\}_{j=1}^{40}\) all provide similar performances (in terms of labelling accuracy) which are competitive with the performance of the best classical kernels for this problem. This might indicate that the 40 physicochemical descriptors are a sufficient and effective representation of the peptides when paired with the data encoding unitary in \eqref{ProblemDataEncodingUnitary}. To confirm our speculation, we conducted further experiments with 30 sets of random Pauli string samples with the best $t$ and $s$ identified from \cref{fig:train_val_acc} for each dataset.

In \cref{fig:30sets}, 30 sets of Pauli strings were sampled and the associated QSVMs demonstrated similar training, validation and testing accuracy trajectories. The testing accuracies are on par with the best testing accuracies of classical kernels we obtained. Accordingly, our results could be used as a baseline for designing data encoding unitary for classical data when the optimal data encoding unitary is not apparent.

\section{Conclusion and Future Work }
In this work, we explored applications of quantum computing to hemolytic peptide classification tasks, which are greatly important in clinical and pharmaceutical contexts. We performed the first application of the QSVM to a peptide classification task with three datasets, HemoPI-1, HemoPI-2, and HemoPI-3, and observed that the QSVM achieved higher accuracies than both classical SVMs and the best published results for the same classification task. However, due to the fact that our datasets used different physicochemical descriptors to represent the peptides, we cannot exclude the possibility that other classical ML models could outperform the QSVM if supplied with the 40 physicochemical descriptors used in this work. We believe the extensions of this research which make use of other QML methods for different problem instances are possible and that this work could open up paths that lead to new quantum bioinformatics applications.

As mentioned in \cref{sec:encoding}, the data encoding unitary used in this work could also be used for other quantum algorithms which might be worth investigating in future work. Similarly, the three datasets we used in this work are different but belong to the same domain, so another avenue that future researchers could benefit from is to apply the QSVM to problem instances from different domains, including protein engineering, natural language processing and financial data. Finally, the number of qubits in this work was set to 6, which allows classical simulations but also provides the opportunity to assess the robustness and limitations of experimental implementations of the QSVM on a practical problem.

\bibliography{citation}
\bibliographystyle{abbrvnat}

%%%%%%%%%%%%%%%%%%%%%%%%%%%%%%%%%%%%%%%%%%%%%%%%%%%%%%%%%%%%%%%%%%%%%%%%%%%%%%%
%%%%%%%%%%%%%%%%%%%%%%%%%%%%%%%%%%%%%%%%%%%%%%%%%%%%%%%%%%%%%%%%%%%%%%%%%%%%%%%
% APPENDIX
%%%%%%%%%%%%%%%%%%%%%%%%%%%%%%%%%%%%%%%%%%%%%%%%%%%%%%%%%%%%%%%%%%%%%%%%%%%%%%%
%%%%%%%%%%%%%%%%%%%%%%%%%%%%%%%%%%%%%%%%%%%%%%%%%%%%%%%%%%%%%%%%%%%%%%%%%%%%%%%
\newpage
\appendix
\gdef\thesection{Appendix \Alph{section}}
\onecolumn

\section{Proof of Theorem~\ref{theorem_gen_error_bound}}
\label{app:proof}
We first introduce the following lemma to support our proof.
\begin{lemma}[Theorem~5.5 in Ref.~\cite{mohri2018foundations}]
\label{lemma1}
    Let $\mathcal{K}:\mathcal{X}\times\mathcal{X}\mapsto\mathbb{R}$ be a PDS kernel and let $\Phi:\mathcal{X}\mapsto\mathcal{H}$ be the feature map associated with $\mathcal{K}$. Let a uniformly random subset $S\in\{\mathbf{x}:\mathcal{K}(\mathbf{x},\mathbf{x})\leq r^2\}$ be a sample set of size $M$, and let $\mathcal{H}=\{\mathbf{x}\mapsto \mathbf{w}\cdot \Phi(\mathbf{x}):\|\mathbf{w}\|_{\mathcal{H}}\leq\eta\}$ for some $\eta\geq 0$. Then the Rademacher complexity of \(\mathcal{H}\), denoted \(\hat{R}_S(\mathcal{H})\), satisfies
    \begin{align}
        \hat{R}_S(\mathcal{H})\leq\sqrt{\frac{r^2\eta^2}{M}}.
    \end{align}
\end{lemma}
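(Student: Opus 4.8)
The plan is to prove the bound directly from the definition of the empirical Rademacher complexity, exploiting only the linear structure of the hypothesis class and the reproducing property of the kernel. Writing $\boldsymbol\sigma=(\sigma_1,\ldots,\sigma_M)$ for a vector of independent Rademacher variables (uniform on $\{\pm1\}$), recall that
\[
\hat R_S(\mathcal{H})=\frac{1}{M}\,\mathbb{E}_{\boldsymbol\sigma}\!\left[\sup_{\|\mathbf{w}\|_{\mathcal{H}}\leq\eta}\sum_{i=1}^{M}\sigma_i\,\mathbf{w}\cdot\Phi(\mathbf{x}_i)\right].
\]
First I would use linearity to pull $\mathbf{w}$ out of the sum, rewriting the bracketed quantity as $\mathbf{w}\cdot\sum_i\sigma_i\Phi(\mathbf{x}_i)$. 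The supremum of this linear functional over the ball $\{\|\mathbf{w}\|_{\mathcal{H}}\leq\eta\}$ is attained by aligning $\mathbf{w}$ with the summed feature vector, so by Cauchy--Schwarz it equals exactly $\eta\,\|\sum_i\sigma_i\Phi(\mathbf{x}_i)\|_{\mathcal{H}}$. This is the step at which the norm constraint $\eta$ enters the bound.

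Next I would control the remaining expected norm. Applying Jensen's inequality to move the expectation inside the square root (concavity of $\sqrt{\cdot}$) gives
\[
\mathbb{E}_{\boldsymbol\sigma}\Big\|\sum_{i}\sigma_i\Phi(\mathbf{x}_i)\Big\|_{\mathcal{H}}\leq\sqrt{\mathbb{E}_{\boldsymbol\sigma}\Big\|\sum_{i}\sigma_i\Phi(\mathbf{x}_i)\Big\|_{\mathcal{H}}^{2}}.
\]
Expanding the squared norm as the double sum $\sum_{i,j}\sigma_i\sigma_j\langle\Phi(\mathbf{x}_i),\Phi(\mathbf{x}_j)\rangle$ and invoking the reproducing property $\langle\Phi(\mathbf{x}_i),\Phi(\mathbf{x}_j)\rangle=\mathcal{K}(\mathbf{x}_i,\mathbf{x}_j)$ (the kernel trick), I would take the expectation term by term. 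Since the $\sigma_i$ are independent, zero-mean, and $\pm1$-valued, we have $\mathbb{E}[\sigma_i\sigma_j]=\delta_{ij}$, so every off-diagonal contribution vanishes and only the diagonal survives, leaving $\sum_{i}\mathcal{K}(\mathbf{x}_i,\mathbf{x}_i)$.

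Finally, I would invoke the hypothesis that $S$ is drawn from the region $\{\mathbf{x}:\mathcal{K}(\mathbf{x},\mathbf{x})\leq r^2\}$, which bounds each diagonal term by $r^2$ and hence gives $\sum_i\mathcal{K}(\mathbf{x}_i,\mathbf{x}_i)\leq Mr^2$. Substituting back through the chain yields $\hat R_S(\mathcal{H})\leq\frac{\eta}{M}\sqrt{Mr^2}=\sqrt{r^2\eta^2/M}$, as claimed. I do not expect a serious obstacle, since every step is elementary; the only point requiring genuine care is the decoupling of the Rademacher cross-terms, which is precisely where the independence structure of $\boldsymbol\sigma$ does the essential work and where the kernel trick replaces any explicit (and possibly infinite-dimensional) computation in the feature space $\mathcal{H}$.
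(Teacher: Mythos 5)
Your proof is correct and complete: the chain of linearity, Cauchy--Schwarz over the ball $\|\mathbf{w}\|_{\mathcal{H}}\leq\eta$, Jensen's inequality, decoupling of cross-terms via $\mathbb{E}[\sigma_i\sigma_j]=\delta_{ij}$, and the diagonal bound $\mathcal{K}(\mathbf{x}_i,\mathbf{x}_i)\leq r^2$ giving $\operatorname{Tr}[K]\leq Mr^2$ is exactly the standard argument. Note that the paper itself offers no proof of this lemma --- it imports it as Theorem~5.5 of Ref.~\cite{mohri2018foundations} --- and your blind attempt reproduces the proof given in that reference essentially verbatim, so there is nothing to add or repair.
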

\begin{proof}[Proof sketch of Theorem~\ref{theorem_gen_error_bound}]
    In the context of the QML models associated with QKMs, we can equivalently define the feature states as
    \begin{align}
        \Phi(\mathbf{x})=|\psi(\mathbf{x})\rangle\otimes |\psi^*(\mathbf{x})\rangle,
    \end{align}
    where $|\psi(\mathbf{x})\rangle=U(\mathbf{x})|0^n\rangle$,
    and the linear combination which defines the hyperplane in the quantum feature space as
    \begin{align}
        \mathbf{w}=\sum\limits_{j=1}^M\alpha_j|\psi(\mathbf{x}_j)\rangle\otimes|\psi^*(\mathbf{x}_j)\rangle.
        \label{Eq:linear_parameter}
    \end{align}
    The correctness of the forms above can easily be checked by $\langle\mathbf{w},\Phi(\mathbf{x})\rangle=\sum_{j=1}^M\alpha_j\|\langle\psi(\mathbf{x})|\psi(\mathbf{x}_j)\rangle\|^2$ which gives the same expression when \(\Phi(\mathbf{x})\) is replaced with \(\rho(\mathbf{x})\) defined in \eqref{QuantumFeatureMap}.

    Without loss of generality, we assume the target function $f$ satisfies $|f|\leq \eta r$. For all $\mathbf{x}\in\mathcal{D}$, we have $|\mathbf{w}\cdot\Phi(\mathbf{x})|\leq \|\mathbf{w}\|_{\mathcal{H}}\|\Phi(\mathbf{x})\|_{\mathcal{H}}\leq\|\mathbf{w}\|_{\mathcal{H}}$, thus, for all $\mathbf{x}\in\mathcal{D}$ and hypothesis $h\in\mathcal{H}$, we have $|h(\mathbf{x})-f(\mathbf{x})|\leq2\|\mathbf{w}\|_{\mathcal{H}}$. 
    With $t \ll 1$, using \eqref{Eq:linear_parameter} and $U(\mathbf{x})=e^{-iH(\mathbf{x})t}$  yields
    \begin{align}
           \|\mathbf{w}\|_{\mathcal{H}}=\sqrt{\langle\mathbf{w},\mathbf{w}\rangle}&=\sqrt{\sum\limits_{i,j}\alpha_i\alpha_j\|\langle0^n|e^{-i(H(\mathbf{x}_i)-H(\mathbf{x}_j))t}|0^n\rangle\|^2} \nonumber\\
           &\approx \sqrt{\sum\limits_{i,j}\alpha_i\alpha_j\left[1+\langle0^n| \left(H(\mathbf{x}_i)-H(\mathbf{x}_j) \right) |0^n\rangle^2t^2\right]} \nonumber\\
           &=\eta,
    \label{Eq:eta}
    \end{align}
    where the second line comes from $e^{-iHt}\approx I-iHt$ when $t\ll 1$.
    
    From the bound on the empirical Rademacher complexity (Lemma~\ref{lemma1}), McDiarmid's inequality~\cite{combes2015extension}, and Theorem~10.3 in Ref.~\cite{mohri2018foundations}, we get
    \begin{align}
    \label{upperbound}
        R(h)\leq\hat{R}(h)+8\eta r\hat{R}_S(\mathcal{H})+(2\eta r)^2\sqrt{\frac{\log(1/\delta)}{2M}}.
    \end{align}
    Finally, substituting $r=1$ (which corresponds to $\mathcal{K}(\mathbf{x},\mathbf{x})\leq1$) and $\eta$ (given in \eqref{Eq:eta}) into the right hand side of \eqref{upperbound}, we get the following upper bound for the generalised error, $\epsilon$:
    \begin{align}
             \epsilon=R(h)-\hat{R}(h)&\leq \frac{8r^2\eta^2}{\sqrt{M}}\left(1+\frac{1}{2}\sqrt{\frac{\log(1/\delta)}{2}}\right) \nonumber\\
              &=\frac{8(\|\bm\alpha\|^2+\kappa(\mathcal{X})t^2)}{\sqrt{M}}\left(1+\frac{1}{2}\sqrt{\frac{\log(1/\delta)}{2}}\right),
    \end{align}
    where $\kappa(\mathcal{X})=\sum_{i,j}\alpha_i\alpha_j \left[\,\langle0^n| \left(H(\mathbf{x}_i)-H(\mathbf{x}_j) \right)|0^n\rangle \,\right]^2$. This concludes the proof.
\end{proof}

% -------------------------------------

\newpage
\section{40 Physicochemical Descriptors}
\label{app:descriptors}
The property names and relevant references for the 40-dimensional physicochemical descriptors used in our experiments are shown in the table below:
\begin{center}
    \includegraphics[width=0.85\linewidth]{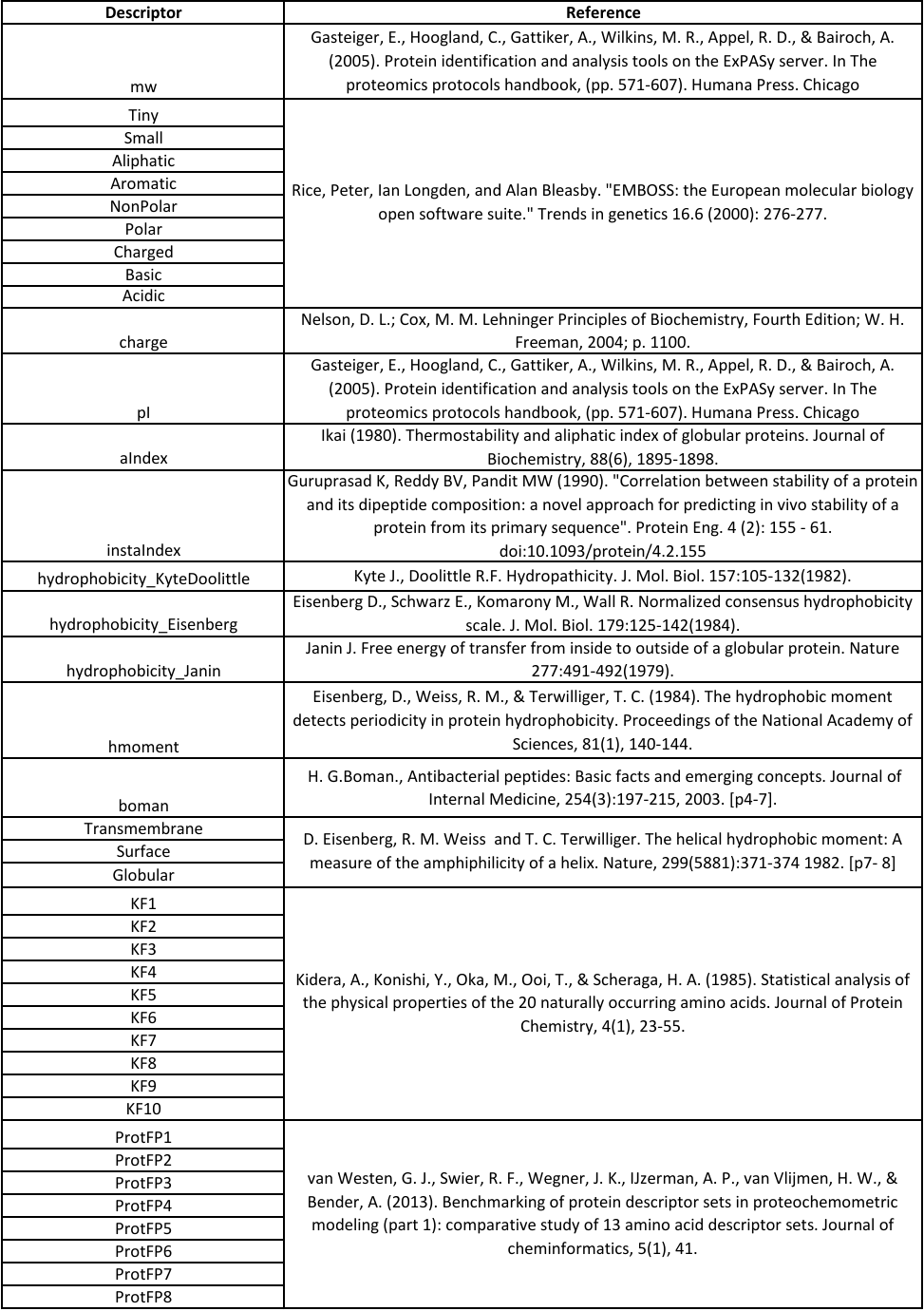}
\end{center}

% -------------------------------------

\newpage
\section{Optimal Hyperparameters for the Linear and Polynomial Kernels}
\label{app:linear_poly}
The table below shows the optimal hyperparameter values for the classical SVMs with linear and polynomial kernels. Here, $C$ denotes the regularisation parameter for the SVM classifier; and deg denotes the degree of the polynomial kernel.
\begin{center}
\small
\begin{tabular}{l|c|cc}
\toprule
& \multicolumn{1}{c|}{Linear} & \multicolumn{2}{c}{Poly} \\
\cmidrule{2-4}
Dataset &$C$  & $C$ & deg \\
\midrule
HemoPI-1 &1   &  0.0001 & 2\\
HemoPI-2 &1000   &  1 & 2\\
HemoPI-3 &10  &  10 & 2\\
\bottomrule
\end{tabular}
\end{center}

% -------------------------------------

\section{t-SNE Plots of the HemoPI Datasets}
\label{app:tsne}
The t-SNE plots below show the 2D visualisation after the 40-dimensional peptide descriptors are projected to 2D using the t-SNE algorithm. It is clear that the two classes are almost linearly separable for HemoPI-1 (with a few obvious outlying points) but there are no clear patterns of clusters for the two classes in HemoPI-2 and HemoPI-3.

\begin{center}
\begin{minipage}[t]{0.3\textwidth}
        \includegraphics[width=\linewidth]{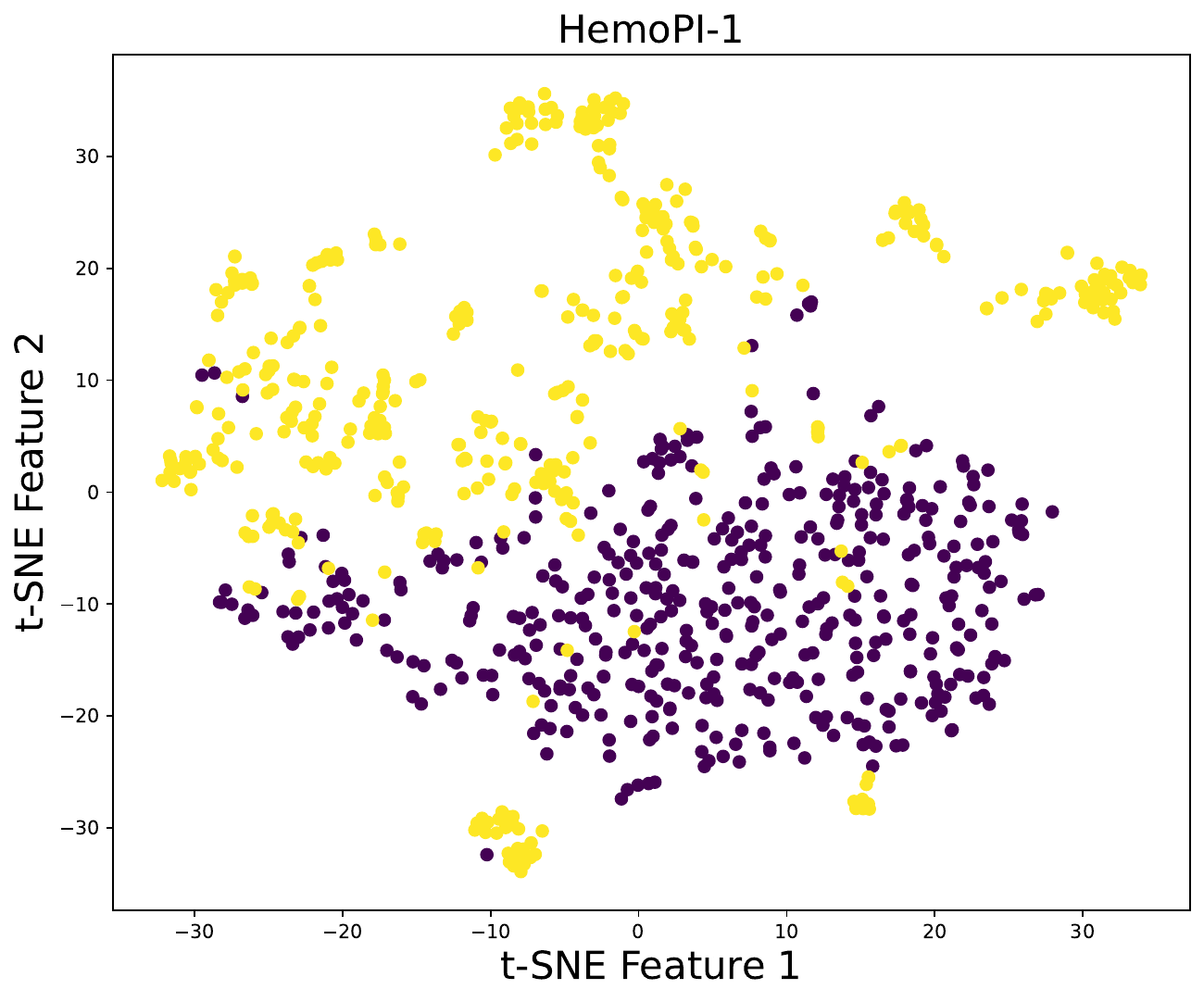}
\end{minipage}
~
\begin{minipage}[t]{0.3\textwidth}
        \includegraphics[width=\linewidth]{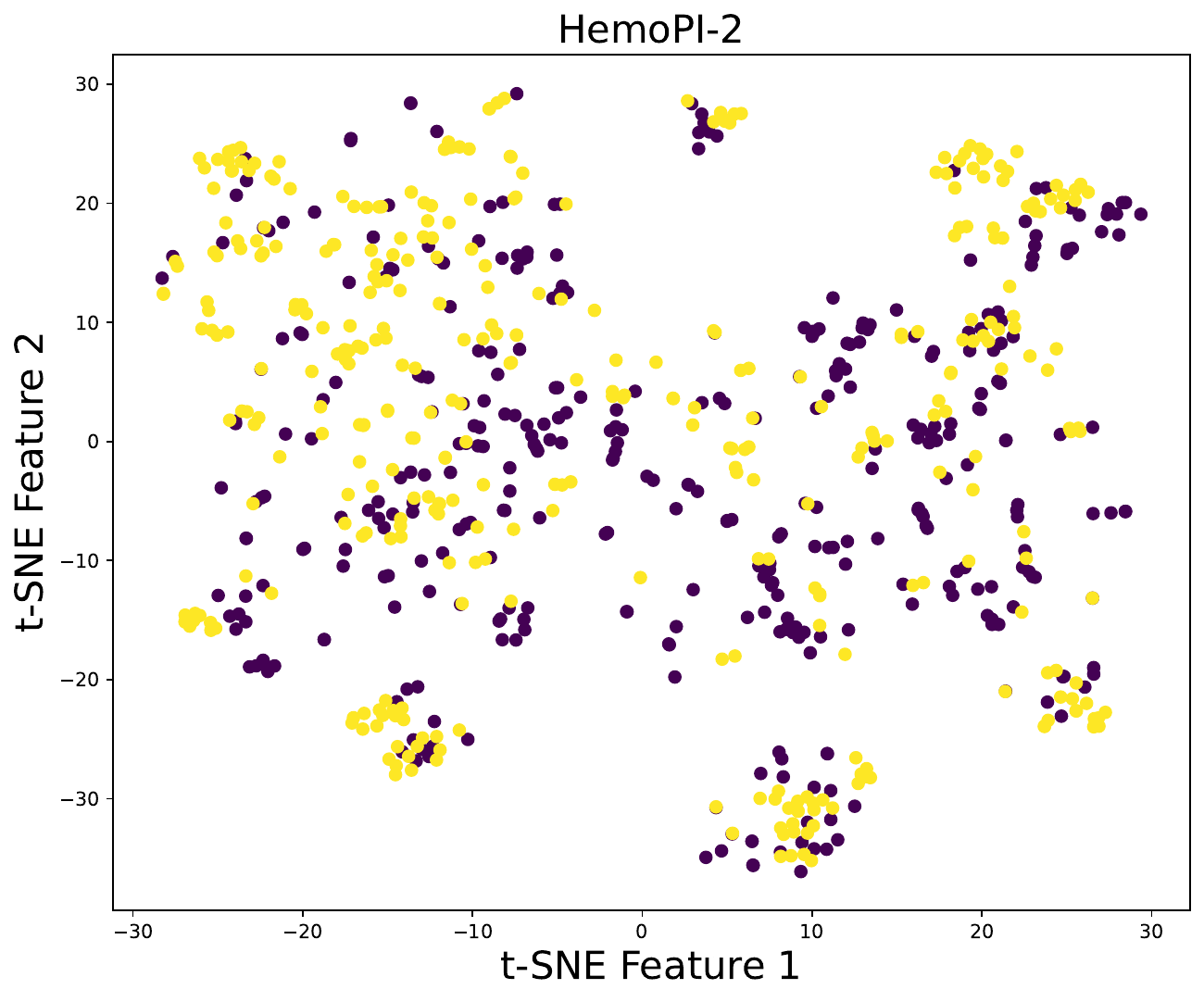}
\end{minipage}
~
\begin{minipage}[t]{0.3\textwidth}
        \includegraphics[width=\linewidth]{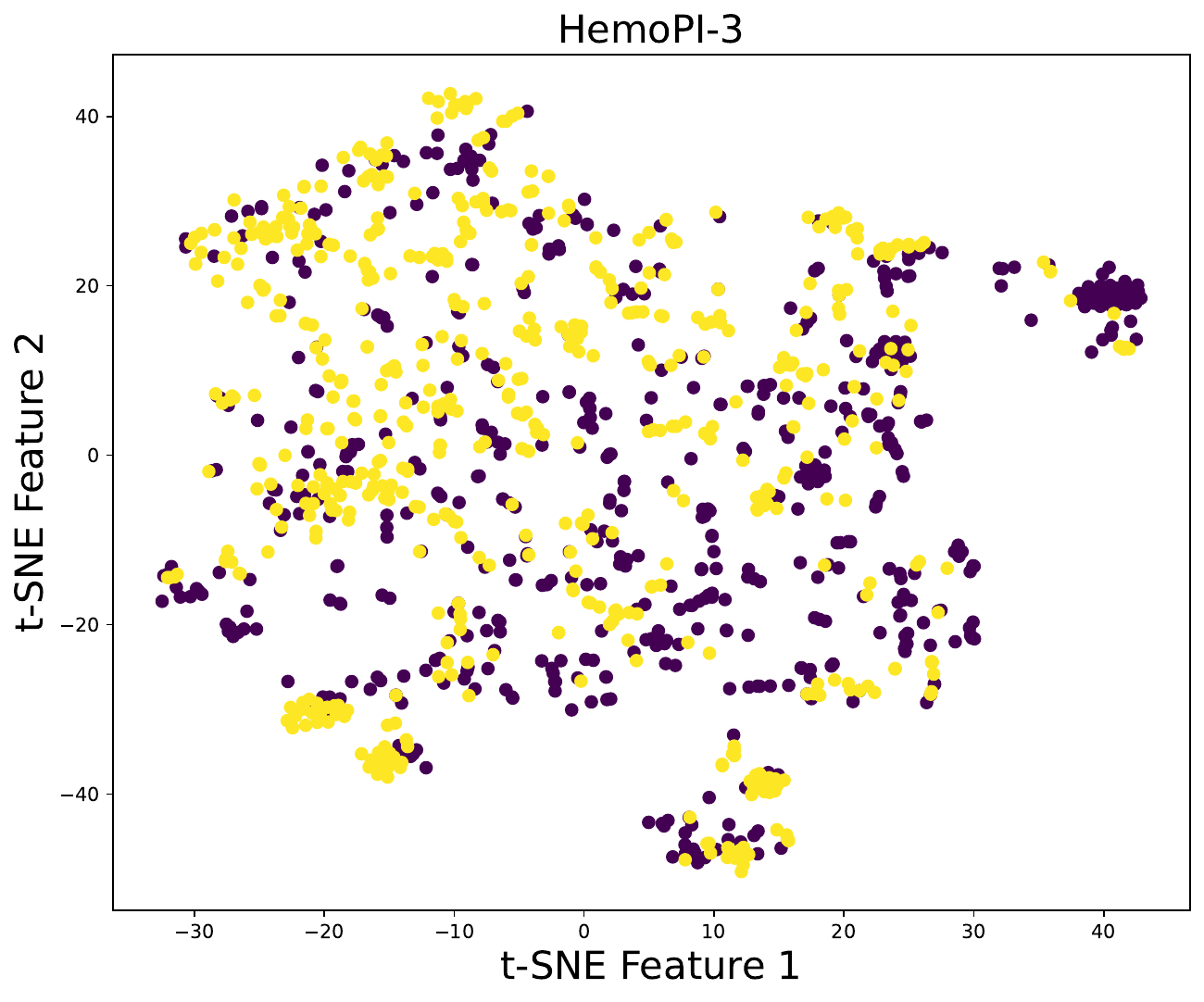}
\end{minipage}
\end{center}

% -------------------------------------

\end{document}